\documentclass[11pt]{article}

\usepackage[T1]{fontenc}
\usepackage[utf8]{inputenc}
\usepackage{lmodern}
\usepackage{microtype}
\usepackage[margin=1in]{geometry}
\usepackage{amsmath,amssymb,amsthm,mathtools}
\usepackage{booktabs}
\usepackage{algorithm}
\usepackage[noend]{algpseudocode}
\usepackage{enumitem}
\usepackage{natbib}
\usepackage{xurl}
\usepackage[hidelinks]{hyperref}
\hypersetup{
  pdftitle={Exact Online Rank Recycling in Floyd's Uniform Subset Sampler},
  pdfauthor={Yingqi Zhang},
  pdfsubject={Exact uniform subset sampling and randomness recycling},
  pdfkeywords={uniform subset sampling, randomness recycling, Floyd sampling, random-bit complexity},
  pdfdisplaydoctitle=true,
  bookmarksopen=true
}

\allowdisplaybreaks
\newtheorem{theorem}{Theorem}

\newtheorem{corollary}[theorem]{Corollary}
\newtheorem{proposition}[theorem]{Proposition}
\theoremstyle{definition}
\newtheorem{definition}[theorem]{Definition}

\newcommand{\card}[1]{\left|#1\right|}
\newcommand{\Unif}{\operatorname{Unif}}
\newcommand{\rank}{\operatorname{rank}}
\newcommand{\Htwo}{H_2}
\newcommand{\bits}{\,\text{bits}}

\title{Exact Online Rank Recycling in Floyd's\\Uniform Subset Sampler}
\author{%
Yingqi Zhang\\
\small Department of Computer Science and Technology, Tsinghua University\\
\small Beijing, China\\
\small \texttt{zhangyq24@mails.tsinghua.edu.cn}}
\date{}

\begin{document}
\maketitle

\begin{abstract}
A uniformly random $m$-subset of $[n]=\{0,\ldots,n-1\}$ has entropy
$\log_2\binom{n}{m}$.  Standard without-replacement procedures often expose
an additional ordering coordinate that is absent from the returned set.  We
show that Floyd's subset sampler admits an exact round-local factorization of
this coordinate.  In round $r$, let $S$ be an $(r-1)$-subset of $[j]$, let
$T\sim\Unif([j+1])$, and let $S'$ be the result of Floyd's transition.  If
$D$ is the zero-based rank of the original draw $T$ in $S'$, then
$(S,T)\leftrightarrow(S',D)$ is a bijection between
$\binom{[j]}{r-1}\times[j+1]$ and $\binom{[j+1]}{r}\times[r]$.
Consequently, $S'$ and $D$ are independent and uniform on their respective
spaces.  The digit $D$ can therefore be merged immediately into a residual
uniform random state; an induction shows that the partial subset remains
independent of that state after every round.  For
$k=\min(m,n-m)$, the sampling phase uses $O(k\log k)$ time and $O(k)$
auxiliary space with an order-statistic tree; explicitly materializing a
complement incurs the unavoidable output cost.  The combinatorial layer
avoids binomial-coefficient arithmetic and recovers the complete $k!$
state-space factor exactly.  We also give a finite counterexample showing
that analogous immediate rank recycling in a partial Fisher--Yates array is
invalid because the unselected suffix retains a correlated ordering.  A
64-bit Rust implementation is checked by exhaustive state-space enumeration
for all $n\leq 8$ and by an entropy-accounting trace for choosing $20{,}000$
of $30{,}000$ items.  We make no claim of runtime superiority over existing
subset samplers.
\end{abstract}

\section{Introduction}
\label{sec:introduction}

Exact uniform sampling without replacement is a basic primitive in
randomized algorithms, simulation, experimental design, and lottery-style
selection.  Throughout, $[n]=\{0,\ldots,n-1\}$ and
$\binom{[n]}{k}$ denotes the family of all $k$-subsets of $[n]$.
The statistical requirement is that every $m$-subset be returned with
probability $1/\binom{n}{m}$.  In the random-bit model, however, exactness is
only one part of the specification: source randomness is itself a resource.
A uniform $m$-subset has Shannon entropy
\begin{equation}
    H_{n,m}=\log_2\binom{n}{m},
    \label{eq:subset-entropy}
\end{equation}
and any exact sampler driven by independent fair bits has expected source
cost at least $H_{n,m}$ \citep{KnuthYao1976}.  For $n=30{,}000$ and
$m=20{,}000$,
\begin{equation}
    H_{30{,}000,20{,}000}
    =H_{30{,}000,10{,}000}
    \approx 27{,}541.1978846043\bits.
\end{equation}
This corresponds to $3{,}442.6497$ bytes of entropy.  The integer $3{,}443$
is only the smallest whole number of bytes whose raw state space is large
enough to cover all possible subsets.

A partial Fisher--Yates shuffle generates an ordered $k$-sample through
\begin{equation}
    P(n,k)=\frac{n!}{(n-k)!}
          =\binom{n}{k}k!
    \label{eq:falling-factorization}
\end{equation}
equiprobable choice sequences.  Returning only the underlying set discards
the $k!$ ordering factor.  Once the procedure has terminated, the
factorization is elementary: an ordered sample is equivalent to an unordered
set together with a permutation of its elements.  Recycling this coordinate
\emph{during} the procedure is more delicate.  A digit is safe to reuse only
when it is independent of every retained state variable that can affect
future output.

This paper makes four scoped contributions.
\begin{enumerate}[leftmargin=*,itemsep=2pt,topsep=4pt]
    \item We express every round of Floyd's subset sampler
    \citep[Column~12]{Bentley2000} as an explicit product bijection
    $(S,T)\leftrightarrow(S',D)$, where $D$ is the rank of
    the original draw in the updated set.
    \item We use this bijection to prove exact subset uniformity and an
    invariant of independence between the partial subset and the residual
    uniform state after every immediate recycling step.
    \item We show why global counting alone is insufficient: immediate
    insertion-rank recycling in a partial Fisher--Yates array is biased, even
    though the completed ordered sample has a set--permutation
    factorization.
    \item We separate the lossless combinatorial transformation from the
    bounded-range state implementation, bound the latter's branch loss, and
    validate the claimed product structure by exact finite-state
    enumeration.
\end{enumerate}

The broader objective of entropy-efficient subset generation is not new.
Uniform-state recycling has been studied for changing ranges and general
online stochastic processes \citep{Mennucci2010,OmerPacher2014,DraperSaad2026},
and fixed-weight binary words---equivalently, uniform subsets---have been
addressed directly by \citet{BodiniDurand2026}.  Their method makes
Fisher--Yates insertions distinguishable and later deconstructs the induced
permutation, either after completion or at block boundaries.  The
contribution here is more local: Floyd's future-relevant combinatorial state
is only a set, and each round exposes a Cartesian rank factor that is already
independent of that complete retained state.  We do not claim novelty for
randomness recycling, Floyd sampling, or entropy-efficient subset generation
as such; the claim concerns this round-local factorization and its immediate
use.

Section~\ref{sec:uniform-states} states the uniform-state interface and its
range-sampling loss.  Section~\ref{sec:failed-fy} gives the Fisher--Yates
counterexample.  Section~\ref{sec:floyd} proves the Floyd rank bijection and
the sampler invariant.  Sections~\ref{sec:entropy}--\ref{sec:evaluation}
separate combinatorial accounting, finite-word implementation, and exact
validation, followed by related work and conclusions.

\section{Uniform random states and bounded draws}
\label{sec:uniform-states}

This section specifies a supporting interface rather than a contribution of
this paper.  The uniform-state model and the split, rejection, and recycling
operations are standard or adapted from prior randomness-recycling
constructions \citep{Mennucci2010,OmerPacher2014,DraperSaad2026}.  We include
Algorithm~\ref{alg:range} to state the independence contract used later and
to separate it from the paper's combinatorial contribution.

\begin{definition}[Uniform state]
For a positive integer $M$, a pair $(Z,M)$ is a uniform state if
$Z\sim\Unif([M])$.  When $M$ is random, this means
$Z\mid M\sim\Unif([M])$.  A uniform state is independent of an external
variable $Y$ when the pair $(Z,M)$ is independent of $Y$.
\end{definition}

The state retains $\log_2 M$ bits of randomness.  Two elementary bijections
are sufficient for the construction.

\paragraph{Appending a source digit.}
If $U\sim\Unif([A])$ is independent of $(Z,M)$, then
\begin{equation}
    (Z,U)\longmapsto AZ+U
    \label{eq:append}
\end{equation}
is a bijection from $[M]\times[A]$ to $[AM]$.  For byte input, $A=256$.

\paragraph{Merging a recycled digit.}
If $D\sim\Unif([r])$ is independent of $(Z,M)$, then
\begin{equation}
    (Z,D)\longmapsto rZ+D
    \label{eq:merge-digit}
\end{equation}
is a bijection from $[M]\times[r]$ to $[Mr]$.  We denote this operation by
$\operatorname{Recycle}(D,r)$.

\subsection{A recycled bounded-uniform sampler}
\label{subsec:bounded}

Algorithm~\ref{alg:range} returns an exact draw from $[N]$ and retains the
offset within a rejected region rather than discarding the entire state.  A
parameter $R\geq 2$ bounds the rejection probability at each attempted
split.  The idealized algorithm uses unbounded integers; finite-word behavior
is discussed in Section~\ref{sec:implementation}.

\begin{algorithm}[t]
\caption{Bounded uniform draw with a residual uniform state}
\label{alg:range}
\begin{algorithmic}[1]
\Require Uniform state $(Z,M)$; target size $N\geq 1$; source digits
$U\sim\Unif([A])$; rejection parameter $R\geq 2$
\Ensure $X\sim\Unif([N])$, independent of the updated residual state
\Function{Draw}{$N$}
    \Loop
        \While{$(M\bmod N)R\geq M$}
            \State draw $U\sim\Unif([A])$
            \State $(Z,M)\gets(AZ+U,AM)$
        \EndWhile
        \State $q\gets\lfloor M/N\rfloor$, $s\gets M-qN$
        \If{$Z<qN$}
            \State $X\gets Z\bmod N$
            \State $(Z,M)\gets(\lfloor Z/N\rfloor,q)$
            \State \Return $X$
        \Else
            \State $(Z,M)\gets(M-1-Z,s)$
        \EndIf
    \EndLoop
\EndFunction
\end{algorithmic}
\end{algorithm}

For a fixed pre-split modulus, write $M=qN+s$ with $0\leq s<N$.
On acceptance, the map
\begin{equation}
    Z\longmapsto (Z\bmod N,\lfloor Z/N\rfloor)
    \label{eq:accepted-split}
\end{equation}
is a bijection from $[qN]$ to $[N]\times[q]$.  On rejection,
$Z\mapsto M-1-Z$ is a bijection from $\{qN,\ldots,M-1\}$ to $[s]$.
Thus a rejected offset is again a valid uniform state.

\begin{proposition}[Bounded-draw contract]
\label{prop:draw-contract}
Let the input uniform state $(Z,M)$ be independent of an external variable
$Y$.  Assume that all source digits used by the call are mutually
independent, uniform on $[A]$, and jointly independent of $(Z,M,Y)$.
Algorithm~\ref{alg:range} terminates almost surely, returns
$X\sim\Unif([N])$ independently of $Y$, and leaves a residual uniform state
independent of $(X,Y)$.
\end{proposition}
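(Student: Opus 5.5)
The plan is to prove a single invariant that is preserved by every elementary step of Algorithm~\ref{alg:range}, and then to read off termination and the output distribution. The invariant is: conditional on the full history of moduli and on $Y$, the current $Z$ is uniform on $[M]$. Because we condition on the history of moduli, $M$ is allowed to be random. This invariant holds at entry by hypothesis, since $(Z,M)$ is independent of $Y$ and $Z\mid M\sim\Unif([M])$.

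We check each step preserves the invariant.
\begin{enumerate}[leftmargin=*,itemsep=2pt,topsep=4pt]
    \item \emph{Digit appends.} Conditional on the history and $Y$, the variables $Z\sim\Unif([M])$ and a fresh $U\sim\Unif([A])$ are independent, because the source digits are jointly independent of $(Z,M,Y)$ and of the digits already consumed. The bijection~\eqref{eq:append} then makes $AZ+U$ uniform on $[AM]$.
    \item \emph{Rejection.} Conditioning on the event $\{Z\geq qN\}$, which is determined by $Z$ and the known $M,N$, leaves $Z$ uniform on $\{qN,\dots,M-1\}$. The reflection $Z\mapsto M-1-Z$ carries this to $\Unif([s])$. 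The new modulus $s$ is a deterministic function of the history, so the invariant survives.
    \item \emph{Acceptance.} Conditional on $\{Z<qN\}$, $Z$ is uniform on $[qN]$. By the bijection~\eqref{eq:accepted-split}, the pair $(X,\lfloor Z/N\rfloor)$ is uniform on the product $[N]\times[q]$. Its two coordinates are therefore independent and uniform, still conditional on the history and $Y$.
\end{enumerate}
Unconditioning gives the three conclusions. First, $X\sim\Unif([N])$, and its conditional law does not depend on $Y$ or on the history, so $X$ is independent of $Y$. Second, the residual $Z\mid(q,X,Y,\text{history})\sim\Unif([q])$. Third, since $q$ is a function of the history, the residual pair $(Z,M)$ is independent of $(X,Y)$.

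\emph{Termination.} After the inner while loop, $(M\bmod N)R<M$, so the rejection probability $s/M$ is less than $1/R\leq 1/2$.

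The inner loop itself terminates deterministically. If $A\geq 2$, $M$ grows geometrically while $s<N$ stays bounded, so $sR<M$ holds once $M>NR$. (If $A=1$ no randomness is consumed, but I will assume $A\geq 2$, as intended.)

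Each attempted split is accepted with probability greater than $1-1/R$, conditionally on everything before it, by the invariant. The number of attempts is therefore dominated by a geometric variable, which gives almost sure termination.

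\emph{Main obstacle.} The delicate step is making the conditioning argument rigorous when the number of loop iterations and the moduli are random. I would handle this by induction over the sequence of elementary steps, conditioning at each step on the entire transcript of moduli and branch outcomes. Each branch outcome is a function of $Z$ and the transcript, and under the invariant it splits the uniform law into uniform laws on subintervals. Summing over the almost surely finite set of transcripts then gives the unconditional statements.
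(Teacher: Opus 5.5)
Your proposal is correct and follows the paper's proof essentially step for step. Both use a loop invariant of conditional uniformity preserved by the append, reflection, and accepted-split bijections, then mix over branch histories, and prove termination by geometric domination of consecutive rejections. Your final claim that the residual $(Z,M)$ is independent of $(X,Y)$ also needs the moduli transcript to be independent of $Y$; this follows at once from $M_0$ being independent of $Y$ and each branch probability $s/M$ depending only on the transcript, and your caveat $A\geq 2$ is a correct assumption that the paper leaves implicit.
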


\begin{proof}
At the beginning of each loop iteration, maintain the invariant that the
current state is uniform and independent of $Y$.  Appending a source digit
preserves this invariant by the bijection in \eqref{eq:append}.  At an
attempted split, conditional on the current modulus $M$, the accepting region
has size $qN$.  Conditional on acceptance, \eqref{eq:accepted-split} maps the
state bijectively to $[N]\times[q]$; hence the returned value is uniform and
independent of both the residual state and $Y$.  Conditional on rejection,
the reindexing $Z\mapsto M-1-Z$ produces a uniform state on $[s]$ and restores
the loop invariant.  Because the conditional law of the returned value is
the same uniform law for every terminal modulus and branch history, mixing
over those histories preserves independence from the final state and from
$Y$.

The expansion loop is finite: repeated appends multiply $M$ by $A$, whereas
$M\bmod N<N$.  Whenever a split is attempted, its rejection probability is
$s/M<1/R$.  Therefore the probability of at least $h$ consecutive rejections
is at most $R^{-h}$, which proves almost-sure termination.
\end{proof}

\subsection{Branch-state loss}
\label{subsec:branch-loss}

The within-branch transformations above are bijective, but the control-flow
branch itself is not encoded back into the pool.  For a split with
\begin{equation}
    \varepsilon=\frac{s}{M},
\end{equation}
the unaccounted state-space loss on the accepted and rejected branches is,
respectively,
\begin{align}
    \ell_{\mathrm{acc}}
      &=\log_2 M-\log_2(qN)
        =-\log_2(1-\varepsilon), \\
    \ell_{\mathrm{rej}}
      &=\log_2 M-\log_2 s
        =-\log_2\varepsilon.
\end{align}
The expected loss of one attempted split is exactly
\begin{equation}
    (1-\varepsilon)\ell_{\mathrm{acc}}
      +\varepsilon\ell_{\mathrm{rej}}
      =\Htwo(\varepsilon),
    \label{eq:binary-entropy-loss}
\end{equation}
where $\Htwo$ denotes binary entropy.  Algorithm~\ref{alg:range} attempts a
split only when $\varepsilon<1/R$.  Since $R\geq2$ and $\Htwo$ is increasing
on $[0,1/2]$, a geometric domination argument gives the per-call bound
\begin{equation}
    \mathbb{E}[\text{total branch loss}]
    \leq \frac{\Htwo(1/R)}{1-1/R}.
    \label{eq:range-loss-bound}
\end{equation}
The bound is intentionally simple rather than tight.  Its purpose is to
separate a tunable loss in bounded-range sampling from the lossless subset
transformation developed below.

\section{Why a global factorization does not justify online recycling}
\label{sec:failed-fy}

Equation~\eqref{eq:falling-factorization} shows that a completed ordered
sample decomposes into a subset and one of $k!$ internal orders.  It does not
follow that an order digit may be returned to the random-state pool before
the ordered sampler has finished.

Consider the following tempting modification of a partial Fisher--Yates
shuffle.  At round $i=0,\ldots,k-1$, draw $x\in[n-i]$, swap array positions
$i$ and $i+x$, let $d_i$ be the rank of the newly selected value among the
first $i+1$ selected values, and immediately recycle $(d_i,i+1)$.  The array
suffix is retained for subsequent swaps.  Although the completed sequence
$(d_0,\ldots,d_{k-1})$ encodes a permutation, an intermediate digit $d_i$ is
correlated with the current ordering of the suffix.  Feeding that digit back
into the pool changes the law of later draws.

A finite counterexample requires neither asymptotics nor a statistical test.
Take $n=6$, $k=3$, and an initial state $Z\sim\Unif([120])$.  The three range
splits are exact because the successive moduli are divisible by $6$, $5$,
and $4$.  Exhaustive enumeration gives the multiplicities in
Table~\ref{tab:bad-fy}; Appendix~\ref{app:counterexample} lists the
corresponding subsets.  Uniformity would require each of the
$\binom{6}{3}=20$ subsets to occur six times.

\begin{table}[t]
\centering
\caption{Output multiplicities under immediate insertion-rank recycling in a
partial Fisher--Yates array with $n=6$, $k=3$, and 120 equiprobable input
states.}
\label{tab:bad-fy}
\begin{tabular}{@{}ccc@{}}
\toprule
Multiplicity per subset & Number of subsets & Uniform multiplicity \\
\midrule
4 & 4 & 6 \\
6 & 12 & 6 \\
8 & 4 & 6 \\
\bottomrule
\end{tabular}
\end{table}

The counterexample isolates the relevant proof obligation.  A recycled digit
must be independent not only of the user-visible partial output but also of
every hidden state variable that can influence future output.  Floyd's set
transition eliminates the candidate-array ordering and admits exactly such a
local independence proof.

\section{Floyd sampling with online rank recycling}
\label{sec:floyd}

Let $k=\min(m,n-m)$.  We sample a uniform $k$-subset and, when
$k=n-m$, return its deterministic complement.  The working set is stored in
an order-statistic dictionary supporting membership, insertion, and rank.
For a finite ordered set $A$ and $x\in A$, write
\begin{equation}
    \rank_A(x)=\card{\{y\in A:y<x\}}
\end{equation}
for the zero-based rank of $x$.  Algorithm~\ref{alg:floyd-recycle} uses the
abstract draw contract of Proposition~\ref{prop:draw-contract}.

\begin{algorithm}[t]
\caption{Uniform subset sampling with online rank recycling}
\label{alg:floyd-recycle}
\begin{algorithmic}[1]
\Require Population size $n$; requested size $0\leq m\leq n$; uniform state
$(Z,M)$
\Ensure A uniform $m$-subset of $[n]$ and an updated residual state
independent of it
\State $k\gets\min(m,n-m)$
\State $S\gets\varnothing$
\For{$r=1,\ldots,k$}
    \State $j\gets n-k+r-1$
    \State $t\gets\Call{Draw}{j+1}$
    \If{$t\in S$}
        \State $S\gets S\cup\{j\}$
    \Else
        \State $S\gets S\cup\{t\}$
    \EndIf
    \State $d\gets\rank_S(t)$
    \Comment{rank of the original draw}
    \State $(Z,M)\gets(rZ+d,rM)$
    \Comment{$\operatorname{Recycle}(d,r)$}
\EndFor
\If{$k=m$}
    \State \Return $S$
\Else
    \State \Return $[n]\setminus S$
\EndIf
\end{algorithmic}
\end{algorithm}

The collision branch deliberately recycles the rank of the original draw
$t$, not the rank of the replacement value $j$.  If $t\in S$ before the
update, it remains in the set after $j$ is inserted.  Moreover, $j$ is larger
than every element of the previous set, so this insertion does not change the
rank of $t$.

\subsection{The round-local bijection}

Fix a round $r\in\{1,\ldots,k\}$ and set $j=n-k+r-1$.  At the beginning of
the round, $S$ is an $(r-1)$-subset of $[j]$.  For $t\in[j+1]$, define the
Floyd transition
\begin{equation}
    F(S,t)=
    \begin{cases}
        S\cup\{t\}, & t\notin S,\\
        S\cup\{j\}, & t\in S,
    \end{cases}
    \label{eq:floyd-transition}
\end{equation}
and let
\begin{equation}
    d=\rank_{F(S,t)}(t)
      =\card{\{x\in F(S,t):x<t\}}\in[r].
\end{equation}

\begin{theorem}[Floyd rank bijection]
\label{thm:local-bijection}
The map
\begin{equation}
    \Phi_{j,r}:
    \binom{[j]}{r-1}\times[j+1]
       \longrightarrow
    \binom{[j+1]}{r}\times[r],
    \qquad
    (S,t)\longmapsto\bigl(F(S,t),\rank_{F(S,t)}(t)\bigr)
    \label{eq:local-bijection}
\end{equation}
is a bijection.
\end{theorem}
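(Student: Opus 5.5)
The plan is to construct an explicit inverse $\Psi_{j,r}$ and then check that the domain and codomain have the same size. The cardinality check is immediate:
\[
\binom{j}{r-1}(j+1)=\frac{j!\,(j+1)}{(r-1)!\,(j-r+1)!}=\binom{j+1}{r}\,r .
\]
So it is enough to prove that $\Phi_{j,r}$ is injective, or equivalently surjective. I will prove surjectivity by writing down a right inverse, which gives the inverse map explicitly.

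First I check that $\Phi_{j,r}$ lands in the stated codomain. In the noncollision case $t\notin S$, and in the collision case $j\notin S$ because $S\subseteq[j]$. Either way $F(S,t)$ is an $r$-subset of $[j+1]$. Since $t\in F(S,t)$ in both cases, the rank is defined and lies in $[r]$.

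Next, given $(S',d)\in\binom{[j+1]}{r}\times[r]$, let $x$ be the element of $S'$ of rank $d$. The draw must satisfy $t=x$, because $t\in F(S,t)$ has rank $d$ there, so $t$ is forced. I then split into two cases according to whether $j\in S'$.

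\textbf{Case $j\notin S'$.} The collision case would put $j$ into $F(S,t)$, so we are in the noncollision case and $S=S'\setminus\{x\}$. This $S$ is an $(r-1)$-subset of $[j]$ not containing $t=x$, and $F(S,t)=S'$.

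\textbf{Case $j\in S'$.}
\begin{itemize}
\item If $x=j$ (that is, $d=r-1$): the only option is the noncollision case with $t=j$ and $S=S'\setminus\{j\}$. The collision case is impossible here, since it needs $t\in S\subseteq[j]$, so $t<j$.
\item If $x<j$: the noncollision case would need $S=S'\setminus\{x\}\ni j$, contradicting $S\subseteq[j]$. So we are in the collision case with $S=S'\setminus\{j\}$. Then $t=x\in S$ as required, and $F(S,t)=S\cup\{j\}=S'$.
\end{itemize}
Because $j$ exceeds every element of $S$, inserting $j$ does not change the rank of $t$, as remarked after Algorithm~\ref{alg:floyd-recycle}. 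Hence the rank is still $d$ in the collision case.

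In each case the preimage exists and is uniquely determined, so $\Phi_{j,r}$ is bijective; the counting identity above is only a consistency check. The main obstacle is the bookkeeping in the subcase $j\in S'$: I have to rule out the collision explanation when $x=j$ and the noncollision explanation when $x<j$. Both exclusions come from the single fact $S\subseteq[j]$, so $j\notin S$ and $t\in S$ implies $t<j$. I expect these to be the only places needing care.
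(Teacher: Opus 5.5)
Your proof is correct and follows the paper's route: you recover $t$ as the rank-$d$ element of $S'$, set $S=S'\setminus\{j\}$ when $j\in S'$ and $S=S'\setminus\{t\}$ otherwise, and treat $t=j$ as its own subcase. You are more explicit than the paper about uniqueness, since you show $t$ is forced and use $S\subseteq[j]$ to rule out the other branch in each case; because existence and uniqueness together already give bijectivity, the counting identity is only a consistency check, as you yourself note.
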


\begin{proof}
We construct the inverse.  Let $(S',d)$ be in the codomain, and let $t$ be
the element of zero-based rank $d$ in the sorted set $S'$.  Define
\begin{equation}
    S=
    \begin{cases}
        S'\setminus\{j\}, & j\in S',\\
        S'\setminus\{t\}, & j\notin S'.
    \end{cases}
    \label{eq:local-inverse}
\end{equation}
In either case, $S$ is an $(r-1)$-subset of $[j]$.  If $j\notin S'$, then
$t\notin S$ and the forward transition inserts $t$.  If $j\in S'$ and
$t\neq j$, then $t\in S$, so the forward transition detects a collision and
inserts $j$.  Finally, if $t=j$, then $j\notin S$ and the forward transition
inserts $t=j$ directly.  Each case reconstructs $(S',d)$, and the
reconstruction is unique.
\end{proof}

The associated cardinality identity is
\begin{equation}
    \binom{j}{r-1}(j+1)=\binom{j+1}{r}r.
    \label{eq:counting-identity}
\end{equation}
The explicit inverse is stronger than this count: it identifies the precise
rank digit that can be recycled while preserving all combinatorial state
needed by later rounds.

\begin{corollary}[Exactness and output--pool independence]
\label{cor:uniformity}
Under the assumptions of Proposition~\ref{prop:draw-contract},
Algorithm~\ref{alg:floyd-recycle} returns each $m$-subset of $[n]$ with
probability $1/\binom{n}{m}$.  The returned subset is independent of the final
residual uniform state.
\end{corollary}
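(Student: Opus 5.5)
The proof will be an induction over rounds. The invariant is that after round $r$ (for $r=0,\dots,k$) the working set $S_r$ is uniform on $\binom{[n-k+r]}{r}$ and independent of the residual uniform state $(Z,M)$. At $r=0$ this holds trivially: $S_0=\varnothing$ is deterministic, and the initial state is a uniform state. The external variable $Y$ in Proposition~\ref{prop:draw-contract} will always be taken to be the current partial set.

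For the inductive step, fix round $r$ with $j=n-k+r-1$, and assume $S=S_{r-1}$ is uniform on $\binom{[j]}{r-1}$ and independent of $(Z,M)$. Apply Proposition~\ref{prop:draw-contract} with $Y=S$. The call $\Call{Draw}{j+1}$ returns $t\sim\Unif([j+1])$ independent of $S$, and it leaves a residual state $(Z',M')$ that is uniform and independent of $(t,S)$. So $(S,t)$ is uniform on the product $\binom{[j]}{r-1}\times[j+1]$, and $(Z',M')$ is independent of it.

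By Theorem~\ref{thm:local-bijection}, $(S',d)=\Phi_{j,r}(S,t)$ is then uniform on $\binom{[j+1]}{r}\times[r]$. Hence $S'$ and $d$ are independent, with $S'$ uniform and $d\sim\Unif([r])$. Since $(S',d)$ is a function of $(S,t)$, the triple $(S',d,(Z',M'))$ has the joint law $\Unif\times\Unif\times\text{law}(Z',M')$, with all three components mutually independent.

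Next comes the merge step $\operatorname{Recycle}(d,r)$. Conditional on $S'$, the pair $(d,(Z',M'))$ consists of $d\sim\Unif([r])$ independent of a uniform state. So by the bijection \eqref{eq:merge-digit}, $rZ'+d$ given $M'$ is uniform on $[rM']$. This conditional law does not depend on $S'$, so the new state is uniform and independent of $S'$, which closes the induction. The sample space of the idealized algorithm has no other hidden variables: the next round depends only on $S'$, the state, and fresh source digits, and those digits are independent of everything so far. This is exactly the point where the Fisher--Yates counterexample failed, and checking it is the main care point. The conditioning on a random modulus must be handled through the definition $Z\mid M\sim\Unif([M])$, i.e.\ by arguing conditionally on each value of $M'$.

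At $r=k$ we get $j+1=n$, so $S_k$ is uniform on $\binom{[n]}{k}$ and independent of the final state. If $k=m$ the output is $S_k$. Otherwise the output is $[n]\setminus S_k$, and complementation is a bijection $\binom{[n]}{k}\to\binom{[n]}{n-k}=\binom{[n]}{m}$. In both cases the output is a deterministic bijective image of $S_k$, so it is uniform with probability $1/\binom{n}{m}$ for each $m$-subset and remains independent of the final residual state. The degenerate case $k=0$ is immediate.
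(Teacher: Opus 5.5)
Your proposal is correct and follows essentially the same route as the paper: induction on rounds with the partial set as the external variable $Y$ in Proposition~\ref{prop:draw-contract}, pushing the uniform law of $(S,t)$ through $\Phi_{j,r}$, merging $d$ via \eqref{eq:merge-digit}, and finishing with the complement bijection. Your explicit treatment of the random modulus (conditioning on $M'$) and of the freshness of later source digits makes precise what the paper's proof leaves implicit, but it is the same argument.
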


\begin{proof}
We prove by induction on $r$ that, before round $r$, the current set is
uniform over $\binom{[j]}{r-1}$ and independent of the pool.  The base case is
$S=\varnothing$.  For the inductive step, apply
Proposition~\ref{prop:draw-contract} with the current set as the external
variable.  The call returns $t\sim\Unif([j+1])$ and a residual pool
independent of $(S,t)$.  By Theorem~\ref{thm:local-bijection}, the pair
$(S',d)=\Phi_{j,r}(S,t)$ is uniform over
$\binom{[j+1]}{r}\times[r]$.  Thus $S'$ is uniform, $d$ is uniform on $[r]$,
and the two are independent.  The residual pool is independent of
$(S',d)$ because these are deterministic functions of $(S,t)$.  Merging $d$
by \eqref{eq:merge-digit} therefore yields a uniform state independent of
$S'$, completing the induction.

After $k$ rounds, $S$ is a uniform $k$-subset of $[n]$.  If $k=n-m$, the
complement map is a deterministic bijection from $k$-subsets to $m$-subsets,
so it preserves both uniformity and independence from the pool.
\end{proof}

\section{Entropy accounting}
\label{sec:entropy}

The $k$ Floyd rounds draw from ranges
\begin{equation}
    n-k+1,n-k+2,\ldots,n,
\end{equation}
whose product is $P(n,k)$.  Their recycled rank radices are
$1,2,\ldots,k$, whose product is $k!$.  Hence the net state-space factor
consumed by the combinatorial layer is
\begin{equation}
    \frac{P(n,k)}{k!}=\binom{n}{k}=\binom{n}{m}.
    \label{eq:net-combination-factor}
\end{equation}
Equivalently,
\begin{equation}
    \sum_{r=1}^{k}\log_2\frac{n-k+r}{r}
      =\log_2\binom{n}{k}.
\end{equation}
Thus the Floyd/rank transformation itself is lossless.  Any gap between
source entropy, output entropy, and retained-pool entropy arises in the
bounded-range implementation---or in an explicit finite-word reset---rather
than in the conversion from ordered choices to an unordered subset.

An exact divisible case makes the product structure concrete.  Let the
initial modulus be $M_0$, assume $P(n,k)$ divides $M_0$, and suppose no refill
or rejection occurs.  Repeated range splits and the local bijections then
give
\begin{equation}
    [M_0]
    \longleftrightarrow
    \binom{[n]}{k}\times
    \left[\frac{k!M_0}{P(n,k)}\right].
    \label{eq:exact-global-product}
\end{equation}
Equivalently, the second factor decomposes as
$[k!]\times[M_0/P(n,k)]$; the implementation folds these two coordinates
into one mixed-radix residual state.  Corollary~\ref{cor:uniformity} extends
exactness and output--pool independence to arbitrary refill and branch
histories.

\section{Complexity and implementation boundaries}
\label{sec:implementation}

Let $k=\min(m,n-m)$.  With a balanced order-statistic tree, each round
performs a constant number of $O(\log k)$ membership, insertion, and rank
operations.  The sampling phase therefore uses
\begin{equation}
    O(k\log k)\text{ time}
    \qquad\text{and}\qquad
    O(k)\text{ auxiliary space}.
\end{equation}
Traversing the sampled side in sorted order adds $O(k)$ time.  When the
sampled side is the complement of the requested output, a merge-style scan
of $[n]$ and the sorted excluded set materializes the requested subset in
$O(n)$ time; this includes the unavoidable cost of emitting a dense output.
If a library inserts every returned value into a separate balanced tree, the
output container adds $O(m\log m)$ time.  These representation costs are not
part of the Floyd transition itself.

The reference implementation uses a 64-bit pool state, 32-bit range sizes,
byte refills, $R=2^{32}$, and an order-statistic tree.  Pool arithmetic is
constant-time in the usual word-RAM model.  Three finite-width boundaries are
worth making explicit.

\begin{enumerate}[leftmargin=*,itemsep=3pt,topsep=4pt]
    \item If appending another byte would overflow the modulus, the
    implementation draws a fresh full-width state and discards the old pool.
    The replacement is still exactly uniform, but retained entropy is lost;
    this reset has no analogue in the unbounded-integer model.
    \item Rank recycling within Algorithm~\ref{alg:floyd-recycle} cannot
    overflow a $W$-bit modulus.  If a successful split uses working modulus
    $M_{\mathrm{work}}$ and range $j+1$, the residual modulus is
    $q=\lfloor M_{\mathrm{work}}/(j+1)\rfloor$.  Since $r\leq j+1$,
    \begin{equation}
        rq
        =r\left\lfloor\frac{M_{\mathrm{work}}}{j+1}\right\rfloor
        \leq M_{\mathrm{work}}<2^W.
    \end{equation}
    Moreover, for $Z\in[q]$ and $d\in[r]$,
    $0\leq rZ+d<rq$.  Thus neither the recycled modulus nor its state value
    requires an overflow fallback in the Floyd loop.
    \item The construction is not a randomness extractor.  It preserves and
    reallocates supplied entropy; it does not repair a biased, correlated,
    or predictable source.  Cryptographic deployment would additionally
    require an appropriate source, side-channel and constant-time analysis,
    and a separate security argument.
\end{enumerate}

\section{Exact validation and a large-instance accounting trace}
\label{sec:evaluation}

We validate the combinatorial product structure by finite state-space
enumeration rather than by frequency tests alone.  For every $n\leq 8$ and
$0\leq m\leq n$, the initial pool ranges over all $n!$ states.  Because
$P(n,k)$ divides $n!$, these tests require no refill, rejection, or reset and
therefore isolate the Floyd/rank transformation.  They verify that
\begin{enumerate}[leftmargin=*,itemsep=2pt,topsep=4pt]
    \item each of the $\binom{n}{m}$ subsets occurs exactly
    $m!(n-m)!$ times;
    \item every output--residual-value pair is unique; and
    \item the final residual modulus is $m!(n-m)!$.
\end{enumerate}
The second property checks the full Cartesian product, not merely the output
marginal.  For $n=6$ and $m=3$, for example, each of the 20 subsets is paired
once with every residual value in $[6]$.

For the motivating instance, let $n=30{,}000$ and $m=20{,}000$, so the
algorithm samples a $k=10{,}000$ element complement.  Table~\ref{tab:large}
reports deterministic modulus accounting along the no-rejection trace of the
64-bit implementation.  This trace has probability approximately
$0.9999996005$ under a uniform source.  It reads 3,449 bytes, retains a state
with $1{,}963{,}181{,}278{,}250{,}000$ possibilities, and leaves only
$5.76344\times10^{-7}$ bits unaccounted for by the output and the retained
pool.  ``Output-only efficiency'' is output entropy divided by source entropy;
``output-plus-pool efficiency'' also credits the final residual entropy.

\begin{table}[t]
\centering
\caption{Entropy accounting for choosing $20{,}000$ of $30{,}000$ items on
the no-rejection trace.  Integer state counts are exact; only displayed
logarithms are rounded.}
\label{tab:large}
\begin{tabular}{@{}lr@{}}
\toprule
Quantity & Value \\
\midrule
Subset entropy $\log_2\binom{30{,}000}{10{,}000}$
    & $27{,}541.1978846043$ bits \\
Source entropy read & $27{,}592$ bits (3,449 bytes) \\
Retained states & $1{,}963{,}181{,}278{,}250{,}000$ \\
Retained entropy & $50.8021148193$ bits \\
Unaccounted branch loss & $5.76344\times10^{-7}$ bits \\
Output-only efficiency & $99.815880997\%$ \\
Output-plus-pool efficiency & $99.9999999979\%$ \\
\bottomrule
\end{tabular}
\end{table}

This is an entropy trace, not a runtime benchmark.  In particular, it does
not establish that an order-statistic tree is faster than competing subset
samplers.  A comparative engineering study would need to control allocation,
cache behavior, output representation, and random-source cost separately.

\section{Related work}
\label{sec:related}

\paragraph{Exact discrete sampling and retained random state.}
\citet{KnuthYao1976} established the random-bit complexity framework for
exact discrete sampling and the classical single-sample upper bound of
$H+2$ bits.  \citet{Lumbroso2013} introduced the Fast Dice Roller for exact
uniform integer generation, and \citet{SaadEtAl2020} developed the Fast
Loaded Dice Roller for general finite distributions.  Reuse of a bounded
uniform register appears in scaling and fair-die algorithms
\citep{Mennucci2010,OmerPacher2014}.  The interval algorithm of
\citet{HanHoshi1997} gives a general arithmetic-coding-like construction for
exact simulation of stochastic processes.  Most directly,
\citet{DraperSaad2026} develop an online theory of randomness recycling based
on uniform-state merge and split operations, with applications that include
changing-range draws and Fisher--Yates permutations.  These works supply the
state-pool layer used here.  The additional question in this paper is which
coordinate of a without-replacement process is independent of its complete
retained combinatorial state and can therefore be recycled immediately.

\paragraph{Enumerative generation of subsets.}
A uniform $k$-subset can be generated globally by drawing an integer in
$[\binom{n}{k}]$ and unranking it.  This is the random-generation counterpart
of enumerative source coding and combinatorial ranking
\citep{Cover1973,NijenhuisWilf1978,Ryabko2006}.  Such methods expose the
global bijection between $\binom{[n]}{k}$ and $[\binom{n}{k}]$ directly and
can attain the information-theoretic state-space size.  They generally use
global rank/unrank logic and arithmetic on binomial coefficients or
comparably wide integers.  The present construction is not a new ranking
order: it factorizes a standard local transition, uses only the successive
ranges $n-k+1,\ldots,n$ and radices $1,\ldots,k$, and never materializes
$\binom{n}{k}$.

\paragraph{Sampling without replacement.}
Fisher--Yates permutation generation is classical
\citep{Durstenfeld1964,Knuth1997}.  The set transition commonly called
Floyd's algorithm is presented by \citet[Column~12]{Bentley2000}, who
attributes it to Robert Floyd; modern treatments discuss it alongside other
simple subset samplers and output-order variants \citep{Ting2021}.  We claim
neither transition as new.  The counterexample in Section~\ref{sec:failed-fy}
does not contradict the offline Fisher--Yates factorization: it shows only
that an insertion-rank digit is unsafe to reuse while a correlated candidate
suffix remains future-relevant.

\paragraph{Related fixed-weight generation.}
\citet{BodiniDurand2026} study the same output family as fixed-Hamming-weight
binary words concurrently.  Their sparse construction performs distinguishable
Fisher--Yates insertions, interprets the resulting labels as a uniform
permutation, and deconstructs that permutation into independent bits.
Recycling after completion gives expected net cost
$\log_2\binom{n}{k}+O(k)$ in their polynomially sparse regime; a multi-stage
variant reinjects permutation entropy at block boundaries and obtains a
$(1+\varepsilon)$-type guarantee while retaining linear time for an
explicitly materialized binary word.  The two approaches are complementary.
Floyd's transition retains no candidate permutation,
Theorem~\ref{thm:local-bijection} supplies an exact product factor in every
round, and the rank digit can be merged immediately while remaining
independent of the updated set.  At the combinatorial layer this recovers the
full $k!$ factor exactly; finite-word range splitting is accounted for
separately.

\section{Conclusion}
\label{sec:conclusion}

Uniform subset sampling illustrates a general constraint on online
randomness recycling: a coordinate that is redundant in the completed output
need not be safe to reuse during the computation.  The retained suffix order
in partial Fisher--Yates gives a finite counterexample.  Floyd's transition
avoids that hidden state and admits a round-local bijection between the
previous set and a bounded draw, on one side, and the updated set and a rank
digit, on the other.

The bijection yields an exact sampler whose partial subset remains
independent of the residual random state after every round.  It recovers the
complete $k!$ ordering factor without binomial-coefficient arithmetic and
separates this lossless combinatorial transformation from the independently
quantifiable loss of bounded-range state management.  More broadly, a sound
online-recycling argument must track all future-relevant internal state, not
only the final user-visible output.

\section*{Software artifact}
The implementation analyzed in this paper is \texttt{entropy-pool} version
1.0.0 \citep{ZhangEntropyPool2026}, identified by Git tag
\texttt{v1.0.0} and commit
\texttt{50b4b7ce508c28e06b70d42270201ce6cd186468}.  The tagged source and the
exact finite-state tests are available in the
\href{https://github.com/yingqi-z20/entropy-pool/tree/v1.0.0}{GitHub v1.0.0 tree};
the same release is distributed through
\href{https://crates.io/crates/entropy-pool/1.0.0}{crates.io}.

\bibliographystyle{plainnat}
\bibliography{references}

\appendix

\section{Exact three-of-six counterexample}
\label{app:counterexample}

For reproducibility, this appendix fixes the convention used in
Section~\ref{sec:failed-fy}.  Initialize $a=[0,1,2,3,4,5]$ and
$(Z,M)=(z,120)$.  For $i=0,1,2$:
\begin{enumerate}[leftmargin=*,itemsep=1pt,topsep=3pt]
    \item split $(Z,M)$ exactly into $x=Z\bmod(6-i)$ and
    $(Z,M)=(\lfloor Z/(6-i)\rfloor,M/(6-i))$;
    \item swap $a_i$ and $a_{i+x}$;
    \item let $d$ be the zero-based rank of $a_i$ in
    $\{a_0,\ldots,a_i\}$; and
    \item set $(Z,M)=((i+1)Z+d,(i+1)M)$.
\end{enumerate}
Enumerating $z=0,\ldots,119$ gives the exact nonuniform counts
\begin{center}
\begin{tabular}{@{}cl@{}}
\toprule
Count & Subsets \\
\midrule
8 & $\{0,3,4\}$, $\{1,2,3\}$, $\{1,2,5\}$, $\{1,4,5\}$ \\
4 & $\{0,2,3\}$, $\{0,2,5\}$, $\{0,4,5\}$, $\{1,3,4\}$ \\
6 & all remaining 12 subsets \\
\bottomrule
\end{tabular}
\end{center}
The discrepancy occurs without range rejection, so it is purely a failure of
the attempted combinatorial recycling.

\section{Numerical details for the large trace}
\label{app:numerics}

Starting from one source byte gives modulus $M_0=256$.  In round $i$, let
$N_i$ be the requested range.  Bytes are appended until
$(M\bmod N_i)/M<2^{-32}$; on the successful branch, the split replaces $M$ by
$\lfloor M/N_i\rfloor$, after which rank recycling multiplies it by the round
number.  Along the no-rejection trace, this recurrence reads a total of 3,449
bytes and ends at
\begin{equation}
    M_{\mathrm{final}}=1{,}963{,}181{,}278{,}250{,}000.
\end{equation}
The accumulated accepted-branch loss is
\begin{equation}
    -\sum_{i=1}^{10{,}000}\log_2(1-\varepsilon_i)
      =5.763441543123\times10^{-7}\bits,
\end{equation}
which agrees with
\begin{equation}
    8\cdot3449
    -\log_2\binom{30{,}000}{10{,}000}
    -\log_2 M_{\mathrm{final}}
\end{equation}
up to the displayed rounding.  The trace probability is
$\prod_i(1-\varepsilon_i)\approx0.9999996005087542$.

\end{document}